\providecommand{\U}[1]{\protect\rule{.1in}{.1in}}
\def\BibTeX{{\rm B\kern-.05em{\sc i\kern-.025em b}\kern-.08em
T\kern-.1667em\lower.7ex\hbox{E}\kern-.125emX}}
\def\BibTeX{{\rm B\kern-.05em{\sc i\kern-.025em b}\kern-.08em
    T\kern-.1667em\lower.7ex\hbox{E}\kern-.125emX}}
\providecommand{\U}[1]{\protect\rule{.1in}{.1in}}
\newtheorem*{ass}{Attack model}
\newcommand{\C}{\mathbb{C}}
\newcommand{\R}{\mathbb{R}}
\newcommand{\calD}{\mathcal{D}}
\newcommand{\calL}{\mathcal{L}}
\newcommand{\calP}{\mathcal{P}}
\newcommand{\abs}[1]{\left\vert #1 \right\vert}
\newcommand{\norm}[1]{\Vert #1 \Vert}
\newcommand{\sprod}[1]{\left\langle #1 \right\rangle}
\newcommand{\impl}{\Rightarrow}
\newcommand{\geqsim}{\gtrsim}
\newtheorem{lem}{Lemma}
\newtheorem{theo}[lem]{Theorem}
\newtheorem{defi}{Definition}
\theoremstyle{remark}
\newtheorem{rem}{Remark}
\definecolor{gerhard}{rgb}{1,0,0}
\begin{document}
\title{Bilinear Compressive Security}

\author{
\IEEEauthorblockN{Axel Flinth} \\
\IEEEauthorblockA{\textit{ Mathematics and Mathematical Statisics} \\ \textit{Umeå University} \\
Umeå, Sweden \\
axel.flinth@umu.se\\}
\and
\IEEEauthorblockN{Hubert Orlicki} \\
\IEEEauthorblockA{\textit{Dep. of Computer Science} \\\textit{Cracow University of Technology} \\
Cracow, Poland\\
hubert.orlicki@petra-po.pl \\}
\and
\IEEEauthorblockN{Semira Einsele, Gerhard Wunder} \\
\IEEEauthorblockA{ \textit{Dep. of Computer Science} \\ \textit{Freie Universität Berlin} \\
Berlin, Germany \\
semira.einsele@fu-berlin.de \\} }

\maketitle

\begin{abstract} 

Beyond its widespread application in signal and image processing, \emph{compressed sensing} principles have been greatly applied to secure information transmission (often termed 'compressive security'). In this scenario, the measurement matrix $Q$ acts as a one time pad encryption key (in complex number domain) which can achieve perfect information-theoretic security together with other benefits such as reduced complexity and energy efficiency particularly useful in IoT. However, unless the matrix is changed for every message it is vulnerable towards known plain text attacks: only $n$ observations suffices  to recover a key $Q$ with $n$ columns. In this paper, we invent and analyze a new method (termed 'Bilinear Compressive Security (BCS)') addressing these shortcomings: In addition to the linear encoding of the message $x$ with a matrix $Q$, the sender convolves the resulting vector with a randomly generated filter $h$. Assuming that $h$ and $x$ are sparse, the receiver can then recover $x$ without knowledge of $h$ from $y=h*Qx$ through blind deconvolution. We study a rather idealized known plaintext attack for recovering $Q$ from repeated observations of $y$'s for different, known $x_k$, with varying and unknown $h$ ,giving Eve a number of advantages not present in practice. Our main result fro BCS states that under a weak symmetry condition on the filter $h$, recovering $Q$ will require extensive sampling from transmissions of $\Omega\left(\max\left(n,(n/s)^2\right)\right)$ messages $x_k$ if are $s$-sparse. Remarkably, with $s=1$ it is impossible to recover the key. In this way, the scheme is much safer than standard compressed sensing even though our assumptions are much in favor towards a potential attacker.

\end{abstract}

\section{Introduction}



\emph{Compressed sensing} \cite{FouRau2013} refers to the idea that underdetermined linear systems $y=Qx$ can be solved efficiently under the assumption that the vector $x$ is sparse. In compressed sensing, $Q$ is called measurement matrix and vector $y$ is a measurement. A typical result is as follows: If $Q \in \C^{m,n}$ is drawn at random according to a standard Gaussian distribution with $m\geqsim s\log(n)$, any $s$-sparse $x_0\in \C^n$ can be uniquely recovered from $Qx_0$ with high probability, for instance via $\ell_1$-minimization.

The above very naturally suggests a communication protocol: If Alice wants to send a sparse vector $x\in \C^n$ to Bob, she compresses with $Q$, and only sends the $m$-dimensional measurements $y$ to Bob. As long as Bob also knows the matrix $Q$, he can subsequently apply a compressed sensing recovery algorithm to decompress $x$ without loss.

How secure is this communication protocol? That is, if an eavesdropper Eve gains knowledge about the measurement $y$ (cyphertext), but is oblivious to the secret measurement matrix $Q$ (secret key) except that it is drawn from a normal Gaussian distribution, what can be inferred about the message $x$ (plaintext)? The short answer is only its norm, i.e., essentially nothing. Formally, we have $\mathbb{P}({y \,\vert \, x }) = \mathbb{P}(y, \, \vert \, \norm{x} )$\cite{bianchi2014security}. Also in cases where leakage of information about the norm is critical, one can argue that compressive sensing has a form of \emph{computational security} -- in essence, there will be far too many sparse solutions of $y=Qx$ that are close to the groundtruth
\cite{rachlin2008secrecy}. There have been many other works in this direction -- we refer to \cite{testa2019compressed} for an overview -- where it is sometimes termed \emph{compressive security}. Interestingly, very recently, compressive security has found application in IoT due to favorable properties such as reduced computational complexity and energy efficiency \cite{HealthIoT_CS}.


\subsection{Known plaintext attacks}
While appealing, the above scheme rigorously only works in a \emph{one shot setting}. If the measurement matrix $Q$ is re-used over several communication rounds, the communication is vulnerable to a so-called \emph{known/chosen plaintext attack}: If an eavesdropper Eve can observe pairs $(x_k, Qx_k)$ where the $x_k$'s are carefully selected, she will of course gain information about the key 
$Q$. If only one or a few such pairs are revealed, the scheme is still essentially uncompromised -- this is analyzed in detail in \cite{Cambaberi2015} --  but if Eve can observe many pairs, the scheme is easily cracked: In fact, if $n$ linearly independent $y_k$'s are observed, Even can compute the secret key by solving a system of linear equations. 
Hence, if perfect security is aspired, Alice and Bob need to necessarily renew their keys from time to time. 

There have been attempts to devise schemes that effectively accomplish the change of measurement matrix in subsequent rounds. Alice and Bob can for instance change the complex vector space basis in which the signal $x$ is sparse in every round of communications \cite{Zhang2016Bilevel}. This are includes so-called scrambling approaches. If these bases, that need to be agreed upon by Alice and Bob are revealed, however, the scheme again becomes vulnerable. Hence, a better approach is needed.

\subsection{Bilinear compressive security}
In this paper, we will consider a novel way of concealing the key from the attacker 
in different communication rounds which we term \emph{Bilinear Compressive Security (BCS)}.
The scheme is simple: In each round of communication, Alice randomly generates a vector $h\in \C^m$, called filter, that she convolves with the 
measurements $Qx$ to produce the effective cyphertext $y=h*Qx$. She then sends $y$ to Bob, who knows the key $Q$ but not the filter $h$. Bob thus needs to solve for \emph{both} filter $h$ and message $x$. This is the \emph{blind deconvolution problem}. Under some assumptions on $h$ and $x$, Bob will indeed be able to do so using efficient algorithms. 
We will discuss this 'correctness issue' in detail later in this paper. Importantly, the filter $h$ is a 'dummy-variable' that does not need to be communicated or agreed upon between Alice and Bob. 
In contrast to compressive security, our scheme needs no additional storage, and can be argued to be inherently supported in (mobile) communications scenarios. Indeed, when Alice sends a 'signal' $Qx$ to Bob, it will naturally go through a sparse convolution 'channel' $h$ due to fading and multi-path. In some high mobility scenarios, it might be not too unrealistic to assume that $h$ is drawn randomly at each transmission. Notably, this physical channel can be easily incorporated in our analysis.
 
How secure is the above BCS scheme? We imagine an eavesdropper Eve, that can observe pairs $(x_k,y_k)$ (possibly with chosen $x_k$). 
Given enough observations, will she be able to reveal the key $Q$? Since the filters are drawn randomly, this should be harder than the corresponding linear problem.
The purpose of this article is to make a rather fundamental mathematical analysis of this chosen plaintext attack. Most importantly, we will not assume that Eves applies any specific algorithm to recover $Q$. Our main result (Theorem \ref{th:sparse_security}) will state that under non-restrictive assumptions on the distribution of the filter $h$ -- it must be symmetric, $\theta h \sim h$ for any unit scalar $\theta \in \C$ -- $Q$ will be \emph{perfectly} secure from known plaintext attacks \emph{if the $x_k$ always are $1$-sparse}. Sparsity of the $x_k$ will also make it impossible to recover $Q$ without additional information unless many $x_k$ are observed: I all $x_k$ are $s$-sparse, the map from $Q$ to $h*Qx_k$ will not be injective unless the number of vectors is larger than $\max(n,(n/s)^2)$.


\subsection{Related work} 
\emph{Physical layer security}
The BCS scheme falls roughly into the category of physical layer security in communication theory, i.e. it could be naturally benchmarked with i) channel-based key
generation schemes (see, e.g., \cite{wunder2018secure} for half-duplex and \cite{Vogt2018_TCOM} for full duplex or \cite{Tang2021} for reconfigurable antennas as a reference)
and ii) wiretap coding (see, e.g., \cite{Tyagi2015_PIEEE,Wiese2017_TIT}. However, the former requires highly volatile channel conditions for sufficient key entropy which, as said above, is only true in high mobility scenarios while the latter needs a physical advantage of Bob and Alice relative to Eve which is also difficult to verify in practice \cite{Utkovski2019}. The only scheme that we are aware of which actually uses the blind deconvolution framework is our recent IEEE IFS paper \cite{wunder2025perfectly}. The focus there was however very different: It was assumed that $Q$ is public and $h$ is a physically reciprocal and essentially (unknown) channel. Secrecy comes then from the full duplex transmission and the considered so-called 'close talker' scenario which creates a physical superposition of almost equal strength signals at the attacker's device.

\emph{A note on related crypto problems}. The BCS setting has a conceptual resemblance to lightweight encryption schemes for IoT based on the \emph{Learning Parity with Noise} (LPN) problem, in particular the LPN-C cryptosystem~\cite{gilbert2008encrypt}. Both use linear structures with additive or multiplicative noise to obscure a low-complexity secret transformation. In both, the ciphertexts are linearly related to the plaintexts through a secret key (matrix) combined with random noise or filtering. The secrecy stems from the \emph{discrete algebraic hardness} of solving noisy linear equations over $\mathbb{F}_2$. The results we prove have the same underlying flavor: We show when it is impossible to invert the map from the key $Q$ to the 'noise-combined cyphertexts' $(h*Qx_k)_{k \in [M]}$. Since $Q$ here is an element of $\C^{m,n}$ (and not $\mathbb{F}_2^q$), the mathematical structure is however quite different.

\subsection{Structure of the article}
In Section \ref{sec:description}, we describe the encryption/decryption scheme together with the attack model.
Correctness and security analysis of the scheme are described in Section \ref{sec:correctness} and Section \ref{sec:security}, respectively. Numerical experiments are carried out in Section \ref{sec:experiment} confirming the empirical relevance of our security analysis. Then we conclude the paper in Section \ref{sec:conclusion}.


\begin{figure}[t]
    \centering
    \includegraphics[width=.45\textwidth]{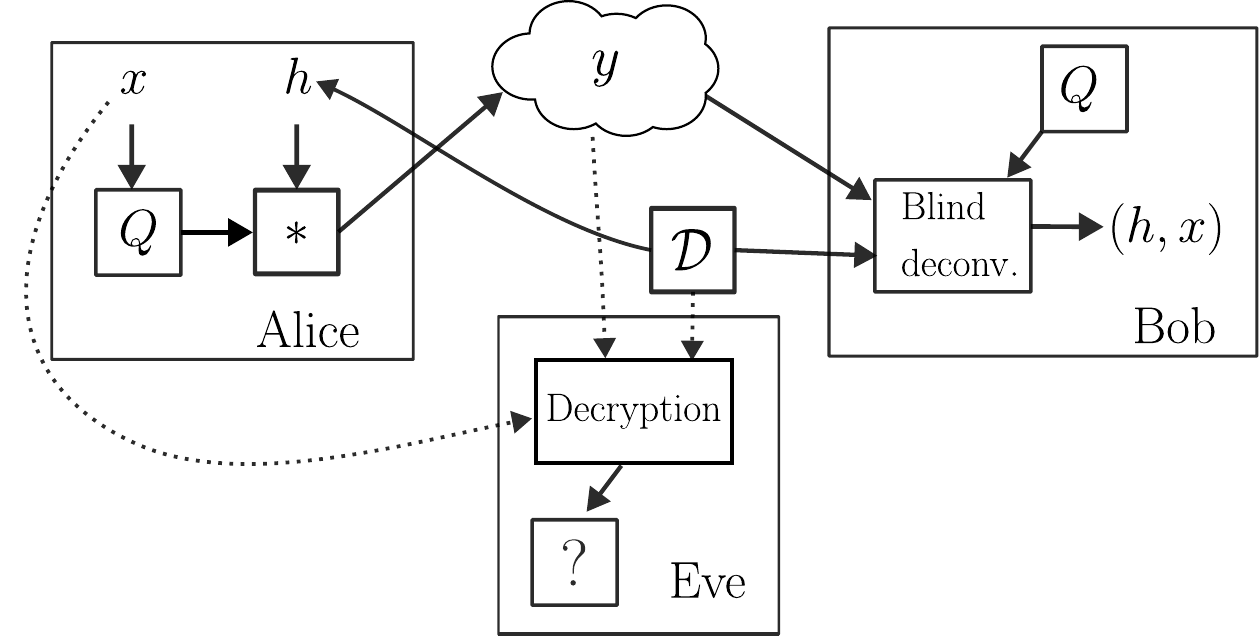}
    \caption{A graphical depiction of BCS. Note that $\calD$ and $y$ is public information, but $x$ and $Q$ are secret. 
    The cloud shall indicate a potential wireless channel which corroborates the security properties of BCS.}
    \label{fig:graphic_scheme}
\end{figure}

\section{System model} \label{sec:description}

Let us begin by formally describing the BCS scheme. It is illustrated in Figure 1. 

\subsection{Encryption and Decryption}

Alice wants to send a message $x\in \mathbb{C}^n$ to Bob. To do so, she first applies a matrix $Q\in \mathbb{C}^{m,n}$ where $Q$ is the secret key shared between Alice and Bob. Subsequently, she generates a random filter $h\in \mathbb{C}^{m}$ according to some (public) distribution $\calD$ and convolves it with $Qx$ to form the vector $y= h
*Qx$ representing the cyphertext which she then sends to Bob. Importantly, Alice does not communicate the concrete realization of $h$ to anyone, not even to Bob, and changes it for every transmission.

To decrypt $y$, Bob must solve the equation $y=h*Qx$ for both $h$ \emph{and} $x$, since both of them are unknown to him. This is the mentioned blind deconvolution problem. It is important to note that this problem has a fundamental scaling ambiguity: the pair $(h,x)$ will give the same value of $y$ as the pair $(h/\alpha,\alpha x)$ for any $\alpha\in \mathbb{C}$. Modulo this ambiguity, however, this problem is solveable under certain assumptions on $h$ and $x$  -- see Section \ref{sec:correctness}.



\subsection{Attack model}\label{subsec:attack-model}
We now formalize the adversarial scenario for BCS. Our goal is to derive universal lower bounds, particularly no a priori structure of $Q$ is asserted.

\begin{ass} 
Consider an adversary Eve that observes plaintext-cyphertext pairs $(x_k,y_k)$. We assume that 
\begin{itemize}
    \item Eve can observe completely noise-free cyphertexts.
    \item Eve can choose $M$ messages $x_k$, each of which she can additionally observe as many cyphertexts $y_{k\ell} = h_{k\ell} *Qx_k, \ell =0,1,2,\dots$ as needed.  That is, Eve gains access to the \emph{probability distribution} of $h*Qx_k$ induced by the draw of $h\sim \calD$ for a number of plaintexts $x_k$.
    \item Eve has complete knowledge of the filter distribution $\calD$.
\end{itemize}
\end{ass}
Note that these assumptions are hat are highly favorable to Eve (yet unrealistic in practice). Also note that Eve has neither access to the key $Q$, nor the \emph{concrete realization} $h_{k\ell}$ of  the random filter $h\sim \calD$. Her goal is to recover 
the secret key \( Q \).
These assumptions are sufficient to make strong claims about the security of the BCS scheme as we will discuss in Section \ref{sec:security}. We now move on to the correctness and security analysis.

\section{Correctness} \label{sec:correctness}

For the scheme to be correct, Bob needs to be able to solve the blind deconvolution problem. A large body of literature has shown that this is possible under structural assumptions on $h$ and $x$, and with high probability for a reasonable random draw of the key $Q$. A tailored result recently derived by a subset of the authors of this paper is as follows:
\begin{theo} \cite{flinth2022bisparse}
    Fix $s\leq n$, $\sigma \leq m$, and let $Q\in \C^{m,n}$ be a random matrix with i.i.d. Gaussian entries. There exists a constant $K>0$ so that if $m\geq K s\sigma \cdot\mathrm{polylog}(n,m)$, with very high probability over the draw of $Q$, Bob can use the efficient HiHTP algorithm \cite{hiHTP,wunder2018secure} to recover any pair of a $\sigma$-sparse $h$ and $s$-sparse $x$ from $y=h*Qx$.
\end{theo}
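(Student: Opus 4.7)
The approach is to lift the bilinear equation $y=h*Qx$ to a linear equation on the rank-one, doubly-sparse matrix $X_0 = hx^T \in \C^{m,n}$. Writing convolution entrywise gives $y_i = \sum_{j,k} Q_{i-j,k}\, h_j x_k = \langle M_i, hx^T\rangle$ for an explicit matrix $M_i$ built from $Q$, so $y=\calA(X_0)$ for a linear map $\calA:\C^{m,n}\to \C^m$ determined by $Q$. When $h$ is $\sigma$-sparse and $x$ is $s$-sparse, $X_0$ belongs to the hierarchically sparse cone of rank-one matrices whose support fits inside a single $\sigma \times s$ block; the effective dimension of this cone is of order $s\sigma + s\log(n/s)+\sigma\log(m/\sigma)$.

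Next I would invoke the HiHTP convergence theory. HiHTP performs iterative hard thresholding that projects onto the $(\sigma,s)$-block sparse set at every step, and its correctness is implied by a \emph{hierarchical restricted isometry property} (HiRIP) of $\calA$: there exists $\delta^\star\in(0,1)$ such that $(1-\delta)\norm{X}_F^2 \leq \norm{\calA(X)}^2\leq (1+\delta)\norm{X}_F^2$ holds uniformly on this cone with some $\delta<\delta^\star$. Reducing correctness of BCS to HiRIP for $\calA$ is the first conceptual step; the advantage is that we only need isometry on a low-dimensional, hierarchically structured set rather than on all of $\C^{m,n}$.

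The third and central step is to verify HiRIP for Gaussian $Q$ with the quoted sample complexity $m \gtrsim s\sigma \cdot \mathrm{polylog}(n,m)$. For any fixed $\sigma \times s$ support block, the restriction of $\calA$ inherits correlated Gaussian randomness from $Q$; Bernstein/Hanson--Wright concentration delivers isometry with $\delta<\delta^\star$ once $m\gtrsim s\sigma$, with exponentially small failure probability. A union bound over the $\binom{m}{\sigma}\binom{n}{s}$ possible blocks handles the combinatorial sparsity selection, and a net-plus-chaining argument upgrades per-block concentration to the full rank-one variety on each block. Combining the two yields the stated scaling.

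The main obstacle is the uniform concentration step: controlling $\calA$ simultaneously across all support patterns and along the non-linear rank-one variety \emph{without} paying a factor of the ambient dimension $mn$. Since the bilinear structure forbids a plain union bound over vectors $(h,x)$, one instead needs a Dudley-type chaining through dyadic scales adapted to the product structure of the supports, which is where the polylog factors are paid. This is precisely what is carried out in \cite{flinth2022bisparse}, so for this paper I would simply verify that the stated hypotheses -- Gaussian $Q$ and $m\gtrsim s\sigma\cdot\mathrm{polylog}(n,m)$ -- fall within the scope of that theorem and cite it.
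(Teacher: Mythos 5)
The paper offers no proof of this theorem at all: it is an imported result, stated with a citation to \cite{flinth2022bisparse}, and your proposal correctly ends in the same place, deferring the technical work to that reference. Your sketch of the underlying argument --- lifting $y=h*Qx$ to a linear map acting on the bisparse rank-one matrix $hx^T$, reducing HiHTP correctness to a hierarchical RIP, and establishing that RIP for Gaussian $Q$ via per-block concentration plus a union bound and chaining --- is a faithful outline of how the cited work proceeds, so there is nothing to reconcile with the paper itself.
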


We refer to \cite{flinth2022bisparse} for detailed literature review of the bisparse blind deconvolution problem, including a description of both HiHTP as well as other recovery methods.  For us, the important message is that as long as Bob knows $Q$, and Alice sends messages and generates filters that are sparse, Bob will be able to recover them up to a scaling ambiguity. 

\section{Security analysis} \label{sec:security}
The rest of the paper will be concerned with the following question: \emph{Which properties of $\calD$ and $(x_k)_{k \in [M]}$ imply  (e.g., how small must $M$ be) that recovery of key $Q$ becomes impossible in BCS?}


\subsection{Preliminaries}
We begin by formally defining what we mean by recovery of the key. 
Again,  a fundamental scaling ambiguity needs to be take into account -- for a fixed filter, the message $x$ will lead to the same cyphertext for the key $Q$ as $\theta \cdot x$ will lead to for the key $Q/\theta$. While knowledge of the typical norm of the filter $h$ and $x$ should be able to resolve the magnitude of $\theta$, the phase will not. We therefore say that Eve has revealed $Q$ if she has done so modulo a \emph{unit scalar}.

\begin{defi} \label{def:recoverability}
    \begin{itemize}
        \item[(a)]For a complex vector space $U$, we denote by $\mathcal{P} U$ the space of equivalence classes in $U$ modulo unit scalars: $u\sim v$ if $u=\theta v$ for a $\theta\in \mathbb{C}$ with $\vert{\theta}\vert=1$.
        \item[(b)]  Let $h$ have a known distribution $\calD$. We say that a set of vectors $(x_k)_{k \in [M]}$ \emph{can retrieve the key for $\calD$-distributed $h$} 
        if the map $Q \to (h*Qx_k)_{k \in [M]}$ from $\mathcal{P} \C^{m,n}$ to the set of probability distributions on $(\C^{m})^M$ is injective.
    \end{itemize}   
\end{defi}


The recoverability of $Q$ depends on the distribution of the filter $h$. In this work, we consider two distinct sets of assumptions on $h$:
(a) $h$ is drawn from a standard Gaussian distribution, i.e., it exhibits no structural properties whatsoever;
(b) $h$ has a symmetric global phase.
Assumption (a) is deliberately restrictive and is included primarily for illustrative purposes. In contrast, assumption (b) is much weaker and allows for a lot of potential structure for Eve to exploit.

\subsection{A illustrative toy case: Gaussian filter.} As advertised, we first consider the case  that the filter $h$ is Gaussian, meaning that its components $h(i), i=1, \dots, m$ are i.i.d. Gaussian distributed. We reiterate that this case is presented only for illustration purposes -- these filters will have full support with probability $1$, so that Bob will have severe trouble solving the blind deconvolution problem, as a simple dimension counting argument shows: He then needs to use $m$ equations to reveal the $m$ parameters of $h$ in addition to the parameters in $x$, which will be impossible. 

The Gaussian assumption makes the recoverability question easy.
\begin{theo} \label{th:gaussian}
    For i.i.d. normal Gaussian filters, no set of vectors $(x_k)_{k \in [M]}$ can retrieve the key.
\end{theo}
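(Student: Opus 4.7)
\emph{Proof plan.} My plan is to exploit two simultaneous symmetries: cyclic convolution is diagonalised by the discrete Fourier transform $F$, and the distribution of an i.i.d.\ circularly symmetric complex Gaussian vector is invariant under coordinatewise unit-modulus phase multiplication. Together they force the joint output distribution to depend on $Q$ only through a ``power-spectrum'' quantity that is preserved by a non-trivial family of modifications of $Q$, producing, for any choice of $(x_k)_{k\in [M]}$, distinct elements of $\mathcal{P}\C^{m,n}$ with identical output distributions.

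More concretely, I would first note that $h*v = F^{-1}(\hat h \odot \hat v)$ and that $\hat h = F h$ is again i.i.d.\ complex Gaussian since $F$ is unitary. Hence the coordinates of $\hat h \odot \widehat{Qx_k}$ are independent complex Gaussians with the $j$-th one scaled by $\widehat{Qx_k}_j$, and by rotation invariance of the complex Gaussian the law of $h*Qx_k$ depends on $Q$ only through $\bigl(|\widehat{Qx_k}_j|^2\bigr)_{j\in[m]}$. Because in the attack model different $x_k$ come with independent filter draws $h_{k\ell}$, the full joint distribution over $k$ factorises, so it too depends on $Q$ only through these power spectra.

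To exhibit the ambiguity, I would pick any diagonal unitary $D\in\C^{m,m}$ that is not a scalar multiple of the identity (unit-modulus diagonal entries, at least two distinct values), and set $Q' = F^{-1} D F Q$. Then $\widehat{Q'x_k} = D\,\widehat{Qx_k}$, so $|\widehat{Q'x_k}_j| = |\widehat{Qx_k}_j|$ for every $j$ and $k$, and by the previous step the output distributions induced by $Q$ and $Q'$ coincide for every choice of $(x_k)_{k\in[M]}$.

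Finally I would verify that $[Q']\neq [Q]$ in $\mathcal{P}\C^{m,n}$: taking any $Q$ of full row rank $m$ (such matrices exist since $m\leq n$ in the compressive regime), the equation $F^{-1}DF\,Q = \theta Q$ would force $F^{-1}DF$ to act as $\theta I$ on $\ran(Q) = \C^m$ and hence $D = \theta I$, contradicting the choice of $D$. The one genuinely non-trivial ingredient is the rotation-invariance step — everything else is linear algebra in Fourier coordinates — and this is precisely where the full phase symmetry of the complex Gaussian is consumed; a law lacking this symmetry (as in the symmetric-phase case treated next) would leave substantially less room for such an ambiguity.
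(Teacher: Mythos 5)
Your proposal is correct and follows essentially the same route as the paper: diagonalize the convolution via the unitary DFT, use the phase invariance of the i.i.d.\ complex Gaussian to reduce the output law to the amplitudes $\abs{\widehat{Qx_k}}$, and then break injectivity by multiplying the rows of $F(Q)$ by non-identical unit scalars (your $Q'=F^{-1}DFQ$ is exactly the paper's $W'$ construction). Your explicit verification that $[Q']\neq[Q]$ via a full-row-rank $Q$ is slightly more careful than the paper's one-line remark, but it is the same argument.
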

\begin{proof}
Since the Fourier transform is unitary, the Fourier transforms of the filters, $\widehat{h}$, will also be i.i.d. Gaussian. The convolution theorem states that $\widehat{h*Qx} = \widehat{h} \odot\widehat{Qx}$, where $\odot$ denotes the pointwise product. Since the entries of a standard Gaussian vector are independent and rotationally symmetric, this implies that the distribution of $\widehat{h}\odot \widehat{Qx}$ will never reveal more information about $Qx$ than the values of the amplitude vectors $\vert\widehat{Qx}\vert = \vert F(Q)x\vert$, where $F(Q)$ denotes the matrix obtained via applying the Fourier transform column-wise on $Q$. 

For any set of vectors $x_k$, the map $\varphi: Q\mapsto \vert F(Q)x\vert$ is however clearly not injective modulu a unit scalar: for an arbitrary $W$, and $W'$ formed by multiplying the rows of $W$ with different unit scalars, we have $\varphi(F^{-1}W)=\varphi(F^{-1}W')$.
\end{proof}

The proof of Theorem \ref{th:gaussian} is illustrative. Its idea is that when the filter is Gaussian, the distributions does not reveal more than the \emph{phase-less} measurements $\abs{F(Q)x_k}\in \R^{m}$, $k\in [M]$ of $Q$. Let us decipher this slightly. Denoting $fq_i \in \C^n$, $i\in [m]$ as the \emph{rows of $F(Q)$}, we can write $\vert{F(Q)x_k}\vert$ as
\begin{align*}
    \abs{\sprod{fq_i,x_k}}, \quad i \in [m], k \in [M].
\end{align*}
Hence, to recover $Q$, Eve will need to solve $m$ independent \emph{phase retrieval problems}\cite{bandeira2014saving}.
\begin{defi}
    Let $(x_k)_{k \in [M]}$ be a set of vectors in $\C^n$. The problem of recovering a vector $q\in \C^n$ from the scalars $\abs{\sprod{q,x_k}}$ is called the \emph{phase retrieval problem}. We say that the set of vectors \emph{do phase retrieval} if the map $\phi: \C^n \to \R^M, q \mapsto (\abs{\sprod{q,x_k}})_{k \in [M]}$ is injective modulu unit scalars.
\end{defi}
One should stress that the main reason for the inability of the vectors $(x_k)$ to retrieve the key {is} \emph{of course not}  that phase retrieval is an unsolvable problem. Indeed, phase retrieval  has been very well studied, and there are multiple practical algorithms to solve it, such as PhaseLift and Wirtinger Flow.\cite{candes2015phase,candes2015wirtinger}. In particular, it has been shown that any generic choice of $M$ vectors can do phase retrieval in $\C^n$ when $M \geq 4n-4$ \cite{bandeira2014saving}.

As our proof of Theorem \ref{th:gaussian} reveals, the reason for non-recoverability of the key is instead the fact that the phase retrieval problems are \emph{independent} of each other. Because of this, each $q_i$ will be retrievable to a row-individual phase, but this does not determine $Q$ up to a global phase.

\subsection{A minimal filter assumption: Phase symmetry} Let us now consider a more realistic, and very non-restrictive, assumption on the filter distribution.
\begin{defi}
    We call a filter distribution \emph{phase-symmetric} if for each unit scalar $\theta \in \C$, $\theta h \sim h$.
\end{defi}
This assumption is for instance fulfilled when Alice draws $h$ by first determining a sparse support,
and then filling the non-zero entries with i.i.d. Gaussians. In fact, this assumption entails Alice using the identity filter $e_0$ multiplied with a random unit scalar, so that the $Qx$ are only minimally confounded. Still, as we will see, as long as Alice also makes sure to send sparse messages, Eve will still have a very hard time to recover $Q$.

The phase symmetry of $h$ immediately implies the following: If $Q$ and $Q'$ are two linear maps so that $Qx_k = \theta_k Q'x_k$ for some unit scalars $\theta_k$,  the variables $Y(Q,x_k)$ and $Y(Q',x_k)$ will be identically distributed. 
Eve will consequently not be able to recover $Q$. Let us record this as a lemma:

\begin{lem} \label{lem:injectivePhi}
    If $h$ is phase-symmetric, a set of vectors $(x_k)_{k \in [M]}$ can retrieve the key only if the map \\ $\Phi: \calP \C^{m,n} \to (\calP\C^m)^M, Q\mapsto (Qx_k)_{k\in [M]}$ is injective.
\end{lem}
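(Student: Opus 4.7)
I would prove the contrapositive: if $\Phi$ fails to be injective, then the set $(x_k)_{k \in [M]}$ cannot retrieve the key. Non-injectivity of $\Phi$ supplies two matrices $Q, Q' \in \C^{m,n}$ that lie in distinct equivalence classes of $\calP\C^{m,n}$ but satisfy $Qx_k \sim Q'x_k$ for every $k$. Unpacking the equivalence relation, this means that there exist unit scalars $\theta_k \in \C$ with $Qx_k = \theta_k Q'x_k$ for each $k \in [M]$.

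The central step is to absorb each per-index phase $\theta_k$ into the filter. Using bilinearity of convolution,
\[
h * Qx_k \;=\; \theta_k\,(h * Q'x_k) \;=\; (\theta_k h) * Q'x_k ,
\]
and by the phase symmetry of $\calD$ the random vector $\theta_k h$ has the same distribution as $h$. Consequently $h * Qx_k$ and $h * Q'x_k$ agree in distribution for each $k$, so the $k$-th marginal laws produced by $Q$ and by $Q'$ coincide.

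The last step is to promote this marginal coincidence to equality of the joint distribution on $(\C^m)^M$. Under the attack model, each cyphertext $y_{k\ell} = h_{k\ell} * Q x_k$ is produced by an independent draw $h_{k\ell} \sim \calD$, hence the law accessible to Eve factorises as the product of marginals across $k$. Equality of marginals therefore forces equality of the full observed distribution, so $Q$ and $Q'$ produce identical observations in spite of representing different elements of $\calP\C^{m,n}$. This contradicts recoverability in the sense of Definition \ref{def:recoverability}, completing the contrapositive.

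There is no real obstacle in this argument: the whole lemma is essentially the identity $h * (\theta\, Q'x_k) = (\theta h) * Q'x_k$ combined with phase-symmetry. The one point that deserves a moment of care is the interpretation of the codomain of $\Phi$ and of the map $Q \mapsto (h * Qx_k)_{k \in [M]}$; once one notes that Eve sees only the product of the marginal laws, owing to the fresh resampling of $h_{k\ell}$ across different $k$ and $\ell$, the marginal argument above is sufficient.
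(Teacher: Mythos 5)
Your proof is correct and follows essentially the same route as the paper, which simply states the lemma as an immediate consequence of the observation that $Qx_k=\theta_k Q'x_k$ implies $h*Qx_k=(\theta_k h)*Q'x_k\sim h*Q'x_k$ by phase symmetry. Your extra remark that the observed law factorises over $k$ because of the fresh, independent draws $h_{k\ell}$ is a welcome clarification of a point the paper leaves implicit (a single shared $h$ across all $k$ could not absorb $M$ different phases simultaneously), but it does not change the substance of the argument.
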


\begin{rem} It should be noted that Lemma \ref{lem:injectivePhi} is a one-way statement: We do not claim that the injectivity of $\Phi$ is \emph{sufficient} to make the key retrievable. In fact, Theorem \ref{th:gaussian} gives a simple counterexample: the standard Gaussian $h$ is phase-symmetric, but the theorem applies irregardless of whether $\Phi$ is injective. We also do not claim that it always is possible to calculate $\Phi(Q)$ from samples of the $Y(Q,h_k)$.
\end{rem}

Knowing a vector $\nu \in \calP\C^m$, i.e. modulu a unit scalar, is equivalent to knowing the vector of its amplitudes, $\abs{\nu}$ and the \emph{relative} phases $\nu_i/\nu_j$. A version of this problem was studied in \cite{kornprobst2021phase}. 
A practical recovery algorithm was proposed and shown to be effective. However, in our language, no theoretical guarantees on the map $\Phi$ were derived for recovery to be possible. Maybe somewhat surprisingly, we have the following remarkably simple characterization.

\begin{theo} \label{th:CBRisPR}
    For a set of vectors $(x_k)_{k \in [M]}$, the map $\Phi$ from Lemma \ref{lem:injectivePhi} is injective if and only if $(x_k)_{k \in [M]}$ can do phase retrieval.
\end{theo}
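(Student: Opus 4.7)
My plan is to treat the two implications separately, and in both directions to translate $\Phi$'s behaviour into row-by-row statements about individual vectors.

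For the easier direction ($\Phi$ injective $\Rightarrow$ phase retrieval), I would take any $q, q' \in \C^n$ with $\abs{\sprod{q, x_k}} = \abs{\sprod{q', x_k}}$ for all $k$, and embed them as the first rows of matrices $Q, Q' \in \C^{m,n}$ whose remaining rows are zero. The amplitude equality lets me pick unit scalars $\theta_k$ with $\sprod{q, x_k} = \theta_k \sprod{q', x_k}$, which upgrades to $Qx_k = \theta_k Q'x_k$ and hence to $[Qx_k] = [Q'x_k]$ in $\calP \C^m$. Injectivity of $\Phi$ then yields $Q = \theta Q'$ for a unit scalar $\theta$, and reading the first row gives $q = \theta q'$.

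For the converse direction, I would start from $\Phi([Q]) = \Phi([Q'])$, i.e.\ $Qx_k = \theta_k Q'x_k$ for some unit scalars $\theta_k$, and read this row by row as $\sprod{q_i, x_k} = \theta_k \sprod{q_i', x_k}$. Taking absolute values and invoking phase retrieval per row gives, for each $i$ with $q_i' \neq 0$, a unit scalar $\tau_i$ such that $q_i = \tau_i q_i'$; rows with $q_i' = 0$ automatically satisfy $q_i = 0$ because phase retrieval implies that $(x_k)$ spans $\C^n$. It then suffices to show that all such $\tau_i$ agree, for then $Q = \tau Q'$ and $[Q] = [Q']$.

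\textbf{Main obstacle.} The alignment of the $\tau_i$ is the only delicate step, and I would carry it out by case analysis on two indices $i, i'$ with $q_i', q_{i'}' \neq 0$. If $q_i'$ and $q_{i'}'$ are linearly independent in $\C^n$, I would apply phase retrieval to the pair $(q_i + q_{i'}, q_i' + q_{i'}')$ -- whose amplitudes match by linearity -- to obtain $q_i + q_{i'} = \rho(q_i' + q_{i'}')$; expanding via $q_i = \tau_i q_i'$ and $q_{i'} = \tau_{i'} q_{i'}'$ and using linear independence forces $\tau_i = \rho = \tau_{i'}$. In the parallel case $q_i' = c q_{i'}'$ with $c \neq 0$, the counterpart $q_i' - c q_{i'}'$ vanishes identically, so linearity gives $\sprod{q_i - c q_{i'}, x_k} = \theta_k \sprod{q_i' - c q_{i'}', x_k} = 0$ for all $k$; spanning then forces $q_i - c q_{i'} = 0$, and substituting the per-row phases collapses this to $c(\tau_i - \tau_{i'}) q_{i'}' = 0$, so $\tau_i = \tau_{i'}$. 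Zero-row handling and the other bookkeeping steps are routine.
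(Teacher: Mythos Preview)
Your argument is correct. The forward direction is immediate, and in the reverse direction your key observation---that the row--wise equalities $\sprod{q_i,x_k}=\theta_k\sprod{q_i',x_k}$ all share the \emph{same} scalar $\theta_k$, so that any linear combination of rows again satisfies a phase--retrieval hypothesis---cleanly forces all per--row phases $\tau_i$ to coincide via the independent/parallel case split. The auxiliary facts you use (phase retrieval implies that the $x_k$ span $\C^n$; the zero class in $\calP\C^m$ contains only the zero vector) are both easy and standard.

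The paper does not actually give a proof here; it only announces that its argument is ``quite mathematically involved'' and invokes the machinery of generalized phase retrieval from \cite{bandeira2014saving,wang2019generalized}, postponing details to a full version. Your route is, by contrast, entirely elementary: it uses nothing beyond the definition of phase retrieval and basic linear algebra, avoiding the generalized--phase--retrieval framework altogether. This is a genuine simplification relative to what the paper advertises, and it has the additional benefit of making the equivalence transparent---one sees directly that the coupling of the rows through the common $\theta_k$ is exactly what upgrades $m$ independent phase--retrieval conclusions to a single global phase.
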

The proof of Theorem \ref{th:CBRisPR} is quite mathematically involved. It in particular invokes non-trivial theory of both phase retrieval as well as so called \emph{generalized phase retrieval} from \cite{bandeira2014saving} and \cite{wang2019generalized}. Due to space limitations, we postpone it to full version. 

Theorem \ref{th:CBRisPR} together with some fundamental phase retrieval theory now lets us prove our main result.

\begin{theo} \label{th:sparse_security}
    Let $s\leq m$ and . If $h$ is phase symmetric, there are no sets of $s$-sparse vectors $(x_k)_{k \in [M]}$ that retrieve the key
    \begin{enumerate}[(a)]
    \item at all if $s=1$.
    
    \item with $M\leq \max(\tfrac{n(n-1)}{s(s-1)}, 4n-3-2\log_2(n-1))$ for $s\geq 2$.
    \end{enumerate}
\end{theo}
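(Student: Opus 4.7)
The plan is to invoke Lemma~\ref{lem:injectivePhi} together with Theorem~\ref{th:CBRisPR} to reduce the question to an impossibility statement about phase retrieval in $\mathbb{C}^n$: the family $(x_k)_{k\in[M]}$ can retrieve the key only if $\Phi$ is injective, and by Theorem~\ref{th:CBRisPR} this is equivalent to $(x_k)_{k\in[M]}$ doing phase retrieval. Hence it suffices to show that in each of the stated regimes no $s$-sparse family of size $M$ can do phase retrieval in $\mathbb{C}^n$.

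For part (a), if $s=1$ then each $x_k=c_ke_{i_k}$ and $|\sprod{q,x_k}|=|c_k|\,|q_{i_k}|$ depends only on the magnitude of a single coordinate of $q$. Altering the phase of any coordinate of $q$ therefore preserves every measurement while changing the equivalence class of $q$ in $\calP\mathbb{C}^n$, ruling out phase retrieval.

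For part (b), the two bounds correspond to two independent arguments. For the combinatorial bound I would use a double count: each support $S_k=\supp(x_k)$ of size $s$ contributes $\binom{s}{2}$ pairs, so if $M\binom{s}{2}<\binom{n}{2}$ -- that is, $M<n(n-1)/(s(s-1))$ -- there exists a pair $\{\alpha,\beta\}\subseteq[n]$ contained in no $S_k$. The vectors $q=e_\alpha+e_\beta$ and $q'=e_\alpha+\sqrt{-1}\,e_\beta$ then produce identical magnitude measurements: for each $k$ at most one of $\alpha,\beta$ lies in $S_k$, so $\sprod{q,x_k}$ and $\sprod{q',x_k}$ either agree (when $\alpha\in S_k$, $\beta\notin S_k$) or differ by the unit scalar $-\sqrt{-1}$ (when $\beta\in S_k$, $\alpha\notin S_k$) or both vanish; yet $q\sim\theta q'$ would force $\theta=1$ and $\theta=\sqrt{-1}$ simultaneously, a contradiction. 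For the second bound I would invoke the Heinosaari--Mazzarella--Wolf lower bound on complex phase retrieval: any phase-retrieving family in $\mathbb{C}^n$ must contain at least $4n-3-2\alpha(n-1)\geq 4n-3-2\log_2(n-1)$ vectors, irrespective of any sparsity restriction.

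The only subtle point lies in the boundary case $M=n(n-1)/(s(s-1))$ (achieved e.g.\ by a Steiner-type design covering each pair exactly once), where the counting argument does not produce an uncovered pair; in such configurations one either needs a slight refinement of the counterexample or observes that the Heinosaari--Mazzarella--Wolf bound takes over in the relevant regime (the two bounds cross near $s\approx\sqrt{n}/2$, so the tight one is always in force). Apart from this essentially cosmetic issue, the proof is a clean reduction via Theorem~\ref{th:CBRisPR} plus a standard pigeonhole construction and a classical phase-retrieval lower bound, so no substantial new technical work is required beyond the material already in place.
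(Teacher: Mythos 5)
Your proposal is correct and follows essentially the same route as the paper: reduce via Theorem~\ref{th:CBRisPR} to the impossibility of phase retrieval, invoke the Heinosaari--Mazzarella--Wolf bound for the $4n-3-2\log_2(n-1)$ term, and use a pair-counting/pigeonhole argument over the supports for the $\tfrac{n(n-1)}{s(s-1)}$ term; the only cosmetic difference is that where the paper certifies failure of phase retrieval abstractly via the rank-$\le 2$ Hermitian matrix criterion of \cite{bandeira2014saving} (taking $H=e_ie_j^*+e_je_i^*$ for an uncovered pair), you exhibit the explicit non-equivalent pair $q=e_\alpha+e_\beta$, $q'=e_\alpha+\sqrt{-1}\,e_\beta$, which is the same certificate unpacked since $qq^*-q'q'^*$ is exactly such a rank-$2$ off-diagonal Hermitian matrix. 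The boundary issue at $M=\tfrac{n(n-1)}{s(s-1)}$ that you flag is real but is present in the paper's own argument as well (its counting likewise only yields $\abs{E^c}<\abs{P}$ under strict inequality), so it is not a gap you introduced.
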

\begin{proof}
By Theorem \ref{th:CBRisPR}, it is enough to argue that the $(x_k)_{k\in [M]}$ cannot do phase retrieval. First, it is well known that sets with $M\leq 4n-3-2\log_2(n-1)$
 cannot do phase retrieval -- see e.g. \cite[Theorem~3]{heinosaari2013quantum}. 
 
 We thus only have to show that recovery is impossible for $s=1$ and for $M\leq \tfrac{n(n-1)}{s(s-1)}$ for $s\geq 2$. To do this, we by a fundamental result about phase retrieval (\cite[Lemma~9]{bandeira2014saving})) only need to argue that there exists a self-adjoint matrix $H \in \C^{m,m}$ of rank $1$ or $2$ with $\sprod{H,x_kx_k^*}=0, k \in [M]$.
 
    (a) If $x_k$ are all $1$-sparse, the matrices $x_kx_k^*$ are all diagonal matrices (with single entries). Hence, all zero-diagonal matrices $H$ have the property $\sprod{x_kx_k^*,H}=0$. Since there surely exist zero-diagonal self-adjoint $H$ with rank $2$ for $m\geq s \geq 2$, the claim is proven. 

    (b) Consider the set $P$ of pairs $(i,j)$ with $i\neq j$, for $k \in [M]$ the sets $P \supseteq E_k= \{(i,j) \, \vert \sprod{x_kx_k^*,e_ie_j^*}=0\}$, and $E= \cap_kE_k$.  Since the $x_kx_k^*$ are Hermitian, all $E_k$ and therefore also $E$ are symmetri: $(i,j)\in E \iff (j,i)\in E$. Consequently, if $E$ is non-empty, there will be a Hermitian matrix $H=e_ie_j^*+e_je_i^*$ of rank $2$ with $\sprod{x_kx_k^*,H}=0$ for $k\in [M]$. 

    $E$ being non-empty is equivalent to the complement $E^c$ not being all of $P$. Note that $E^c = \cup_{k \in [M]}E_k^c$.  Since the $x_k$ are sparse, the matrices $x_kx_k^*$ are also -- they all have at most $s(s-1)$ non-zero entries.  Consequently, if $M < \frac{n(n-1)}{s(s-1)}$.
, $\abs{E^c} \leq \sum_{k\in [M]}\abs{E_k^c} \leq M\cdot s(s-1) <n(n-1) = \abs{P}$. That is, $E^c\neq P \impl E \neq \emptyset$. The proof is finished.

\end{proof}

This result deserves a discussion. First,  \ref{th:sparse_security}(a), i.e. that the key $Q$ is not retrievable at all when $1$-sparse vectors are used -- might at first seem like a very strong result. It is indeed interesting from a mathematical point of view, but in terms of the security of the encryption scheme, it is however of limited importance. If Eve can calculate $\Phi(Q)$, she will in fact \emph{in that case} not need to retrieve the key, i.e determine $Q$ up to a global scalar, in order to determine the $x_k$. In fact, since $Qx_k$ in the case of $1$-sparse plaintexts always will be scaled single columns of $Q$, it is enough for her to determine each column up to an individual phase. This is easier than recovery of the key in the sense of Definition \ref{def:recoverability}, and is not even ruled out by Theorem \ref{th:gaussian} in the case of Gaussian filters! 

The much more interesting result is therefore \ref{th:sparse_security}(b). It quantifies the intuition that the key is more secure in our BCS scheme than the standard compressed sensing scheme. Where as $n$ vectors $x_k$ trivially are enough to retrieve the key in the latter case, we will need a quadratic amount if the $x_k$ are sparse! The security gain however plateaus quite early -- when $s \sim \sqrt{n}$, the results applicable to non-sparse $x_k$ dominate our guarantee.

\section{Experiments} \label{sec:experiment}

 Let us perform a numerical experiment to show that Theorem \ref{th:sparse_security} is of practical relevance.
   

\paragraph{Experiment description} Our results above are all injectivity statements, i.e. conditions ensuring that the equations we consider have a unique solution. They say nothing about whether a concrete algorithm can or can't be used to solve for the key for certain number of measurements. To test this, we perform a small numerical study. Let us stress that we here also only address the problem of retrieving a $Q\in \C^{m,n}$(up to a scalar) from the measurements $b=\Phi(Q)$, rather than estimating $Q$ directly from samples $h*Qx_k$. See also the discussion  of Theorem \ref{th:sparse_security}. 

Our approach to retrieve $Q$ from $\Phi(Q)$ is to fit the values $\Phi(Q)_k= Qx_k$ to the values $b_k$, taking the global phase ambiguity for each vector into account. That is, to minimize the loss function 
\[
\mathcal{L}(Q) = \sum_k \min_{\abs{\theta_k}=1}\left\| \Phi(Q)_k - \theta_k b_k \right\|_2^2.
\]
The value of $\mathcal{L}$ can be calculated although we only know the $b_k$ modulu global unit scalars. To minimize $\calL$, we use the L-BFGS-B optimization algorithm (as implemented in numpy) \cite{liu1989limited}, a quasi-Newton method designed for high-dimensional problems with variable constraints. Unlike full-memory quasi-Newton methods, which require storing and updating the entire Hessian matrix, L-BFGS-B uses a limited-memory approach, retaining only a small subset of gradient and variable updates, making it a good choice for solving large-scale optimization problems. Its effectiveness in addressing similar relative phase retrieval problems has been demonstrated in prior research, which justifies its selection for this study\cite{kornprobst2021phase,Ji2016phase}. 

The $x_k\in \C^n$ are generated i.i.d., by choosing supports of size $s$ uniformly at random, and then fill the supports with standard Gaussian entries. $Q\in \C^{m,n}$ is generated according to a Gaussian distribution, and then attempted to be recovered from $b \in (\calP \C^m)^M$ via minimization of $\calL$. For $m=3$ and $n=100$, as well as $m=5$ and $n=50$, we vary the number $M$ and the sparsity $s$ of $x_k$: For $n=50$, we test values for $M$ from $10$ to $500$ and $s$ between $1$ and $48$. For $n=100$ we test values of $M$ ranging from $50$ to $2000$ and $s$ between $1$ and $98$. In both cases, the values more tightly spaced for lower values
to  enable a test of a wide range of values for both $M$ and $s$ without costing too much computation time. Each experiment is repeated $100$ times, and we deem an attempt successful if the relative error modulo unit scalars. between $Q$ and $Q_{\text{recovered}}$, $\min_{\abs{\theta}=1} \|Q - \theta Q_{\text{recovered}}\|/{\|Q\|}$  is below $0.1$. We use this relatively lax threshold to save on computing time, but also considering that some type of quantization would probably be applicable in practice.

\begin{figure}[th]
    \centering
    \includegraphics[width=0.38\textwidth]{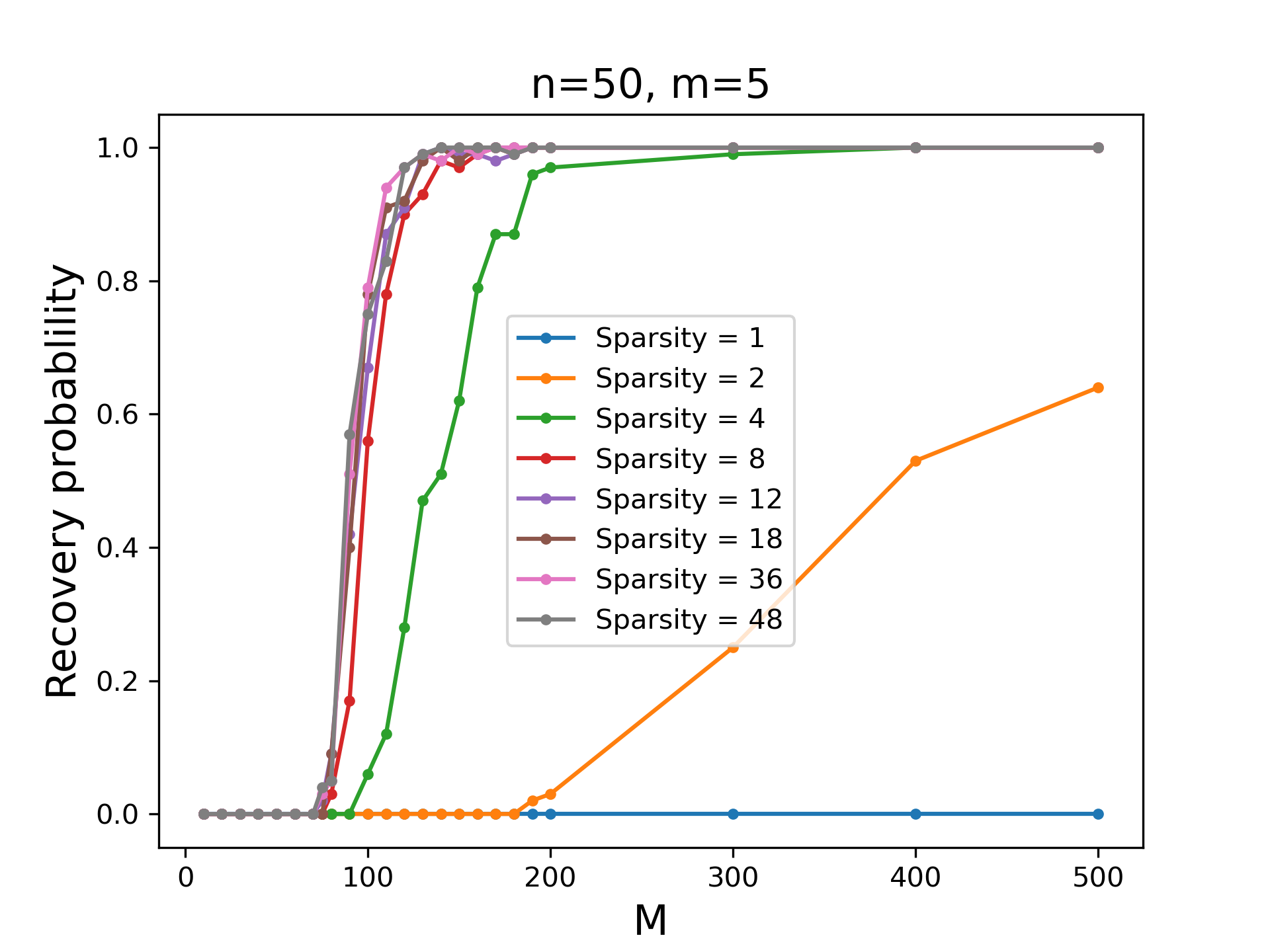} \includegraphics[width=0.38\textwidth]{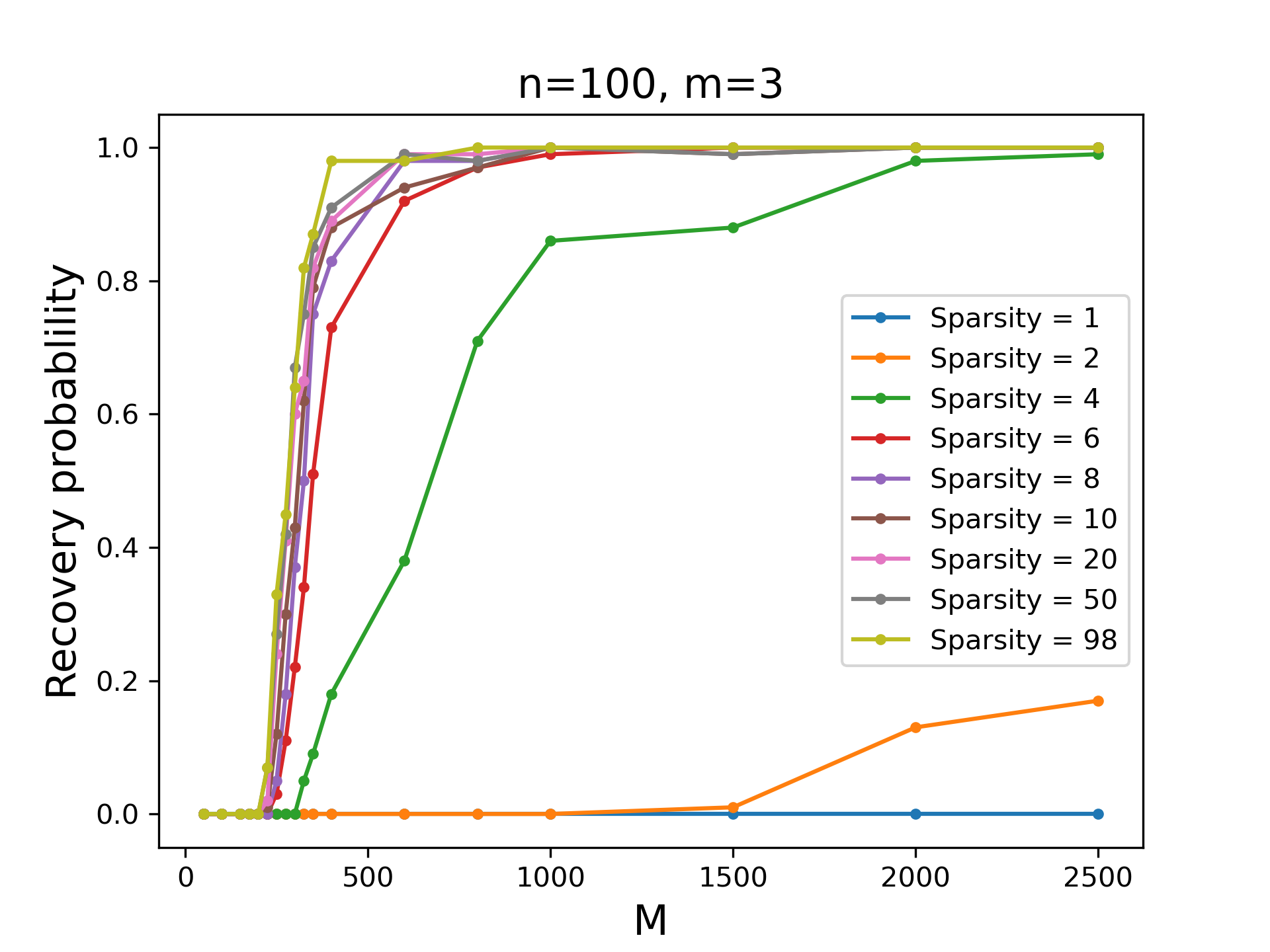}
    \caption{Success rate vs. $s$ and $M$ for $n=50$ and $m=5$ (top) and $n=100$ and $m=3$ (bottom). Best viewed in color.} 
    \label{fig:lineplots}
\end{figure}

\begin{figure}[th]
    \centering
    \includegraphics[width=0.4\textwidth]{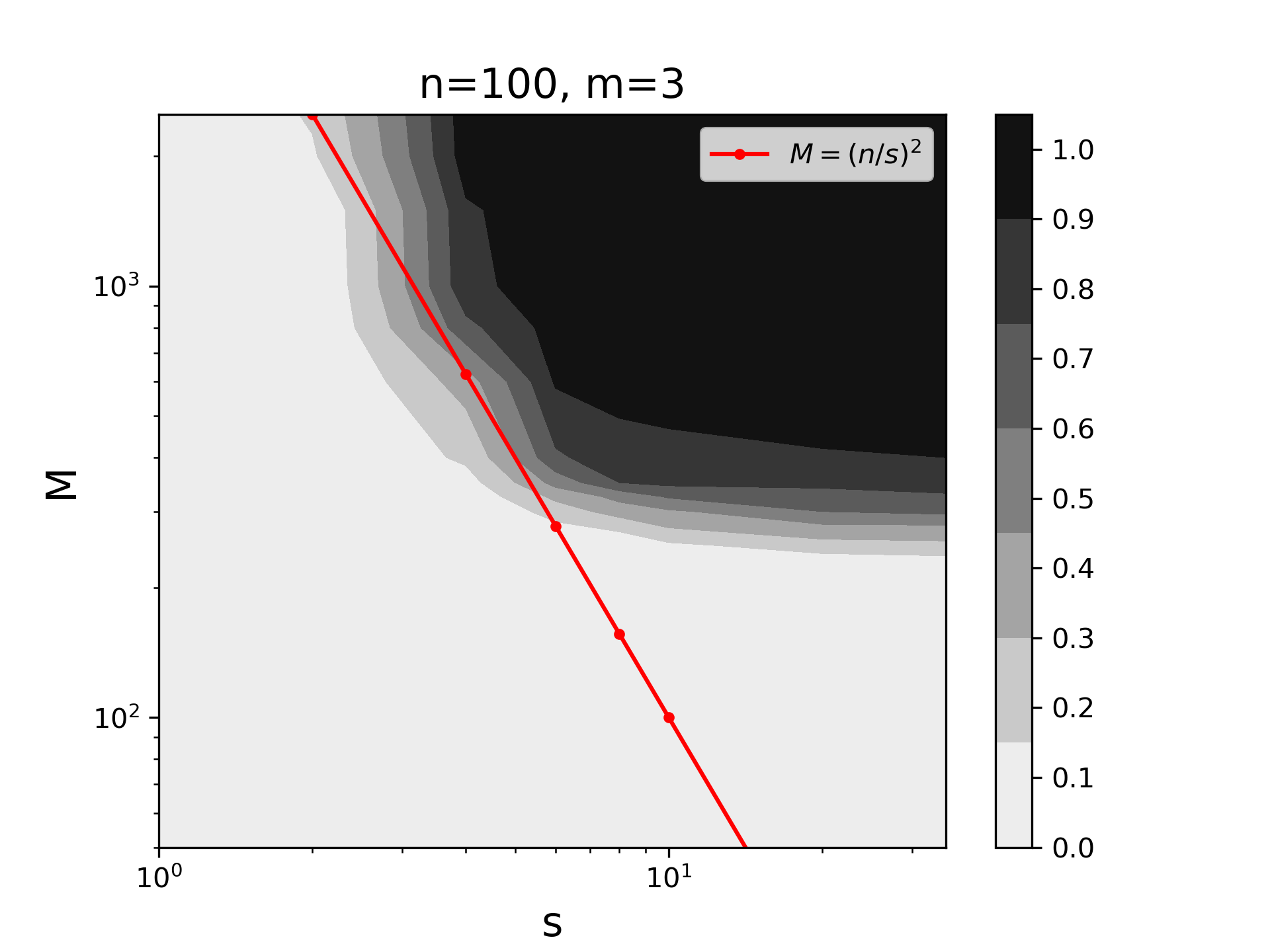}
    \caption{Success rates in dependence of $s$ and $M$. Both axes are in log-scale. } 
    \label{fig:contourplot}
\end{figure}

\paragraph{Results.}
Figure~\ref{fig:lineplots} shows the empirical recovery probability as a function of the sparsity $s$ and the number of measurements $M$. 
We observe that recovery fails for small values of $s$ and is in particular impossible for $s=1$. 
As $s$ increases, the recovery probability increases before plateauing around $s \approx 6$ for $n = 50$ and $s \approx 8$ for $n = 100$. 
These results are in correspondence with our theoretical predictions: recovery is impossible in the extreme sparse case of $s=1$ and becomes progressively easier as  $s$ increases, reaching a plateau roughly at $s \approx \sqrt{n}/2$.

To further strengthen the last point, in Figure \ref{fig:contourplot} we present a contourplot of the recovery probability in dependence of $M$ and $s$ for the $n=100$ experiments, along with the curve $L=(n/s)^2$. Here, we clearly see that the recovery threshold follows $(n/s)^2$ until a critical value for $s\approx 6 \approx \sqrt{n}/2$ , where the bounds applicable for non-sparse $x_k$ become relevant. This is in perfect agreement with the picture Theorem \ref{th:sparse_security} paints.


\section{Conclusion} \label{sec:conclusion} In this article, we proposed a new encryption scheme based on bilinear compressive sensing. The idea is simple: in addition to a linear encryption with a key $Q$, the sender convolves with a randomly drawn filter. Since the filter is redrawn at every transmission, the scheme circumvents the obvious known-plaintext-vulnerability of linear compressed sensing. We confirmed this intuition with a theoretical result: If Eve wants to recover $Q$ from studying the cyphertexts corresponding to $M$ $s$-sparse plaintexts $x\in \C^n$, $M$ needs to grow as $(n/s)^2$, even if she studies the distribution of the cyphertexts for an indefinite amount of draws of the filter. 

There is much room to prove stronger results. We have not given a practical method for Eve to recover $Q$ that provably works for realistic distributions for $h$. Also, we have not studied the effect of the fact that Eve in practice only can study finitely many samples of the cyphertext distribution. On the other hand, we have also not considered situations in which Eve has additional information about the key. Most importantly, we have only studied the problem of recovering $Q$ exactly, and not to which extent Eve can recover the plaintexts only from partial information of $Q$. These are interesting directions for future work.

\subsection*{Acknowledgement} AF acknowledges support
from the Wallenberg AI, Autonomous Systems and Software Program (WASP) funded by the Knut and
Alice Wallenberg Foundation. HO performed his research during an internship at the department of mathematics and mathematical statistics at Umeå University, and acknowledges support from them.
\bibliographystyle{abbrv} 
\bibliography{wdh}

\begin{thebibliography}{10}

\bibitem{bandeira2014saving}
A.~S. Bandeira, J.~Cahill, D.~G. Mixon, and A.~A. Nelson.
\newblock Saving phase: Injectivity and stability for phase retrieval.
\newblock {\em Applied and Computational Harmonic Analysis}, 37(1):106--125, 2014.

\bibitem{bianchi2014security}
T.~Bianchi, V.~Bioglio, and E.~Magli.
\newblock On the security of random linear measurements.
\newblock In {\em 2014 IEEE international conference on acoustics, speech and signal processing (ICASSP)}, pages 3992--3996. IEEE, 2014.

\bibitem{Cambaberi2015}
V.~Cambareri, M.~Mangia, F.~Pareschi, R.~Rovatti, and G.~Setti.
\newblock On known-plaintext attacks to a compressed sensing-based encryption: A quantitative analysis.
\newblock {\em IEEE Transactions on Information Forensics and Security}, 10(10):2182--2195, 2015.

\bibitem{candes2015phase}
E.~J. Candes, Y.~C. Eldar, T.~Strohmer, and V.~Voroninski.
\newblock Phase retrieval via matrix completion.
\newblock {\em SIAM review}, 57:225--251, 2015.

\bibitem{candes2015wirtinger}
E.~J. Candes, X.~Li, and M.~Soltanolkotabi.
\newblock Phase retrieval via wirtinger flow: Theory and algorithms.
\newblock {\em IEEE Transactions on Information Theory}, 61(4):1985--2007, 2015.

\bibitem{flinth2022bisparse}
A.~Flinth, I.~Roth, and G.~Wunder.
\newblock Bisparse blind deconvolution through hierarchical sparse recovery.
\newblock {\em arXiv preprint arXiv:2210.11993}, 2022.

\bibitem{FouRau2013}
S.~Foucart and H.~Rauhut.
\newblock {\em A Mathematical Introduction to Compressive Sensing}.
\newblock Birkhäuser, 2013.

\bibitem{gilbert2008encrypt}
H.~Gilbert, M.~J. Robshaw, and Y.~Seurin.
\newblock How to encrypt with the lpn problem.
\newblock In {\em International Colloquium on Automata, Languages, and Programming}, pages 679--690. Springer, 2008.

\bibitem{heinosaari2013quantum}
T.~Heinosaari, L.~Mazzarella, and M.~M. Wolf.
\newblock Quantum tomography under prior information.
\newblock {\em Communications in Mathematical Physics}, 318(2):355--374, 2013.

\bibitem{Ji2016phase}
L.~Ji and Z.~Tie.
\newblock On gradient descent algorithm for generalized phase retrieval problem.
\newblock In {\em 2016 IEEE 13th International Conference on Signal Processing (ICSP)}, page 320–325. IEEE, Nov. 2016.

\bibitem{kornprobst2021phase}
J.~Kornprobst, A.~Paulus, J.~Knapp, and T.~F. Eibert.
\newblock Phase retrieval for partially coherent observations.
\newblock {\em IEEE Transactions on Signal Processing}, 69:1394--1406, 2021.

\bibitem{liu1989limited}
D.~C. Liu and J.~Nocedal.
\newblock On the limited memory bfgs method for large scale optimization.
\newblock {\em Mathematical programming}, 45(1):503--528, 1989.

\bibitem{rachlin2008secrecy}
Y.~Rachlin and D.~Baron.
\newblock The secrecy of compressed sensing measurements.
\newblock In {\em 2008 46th Annual Allerton conference on communication, control, and computing}, pages 813--817. IEEE, 2008.

\bibitem{hiHTP}
I.~{Roth}, M.~{Kliesch}, A.~{Flinth}, G.~{Wunder}, and J.~{Eisert}.
\newblock Reliable recovery of hierarchically sparse signals for gaussian and kronecker product measurements.
\newblock {\em IEEE Transactions on Signal Processing}, 68:4002--4016, 2020.

\bibitem{HealthIoT_CS}
F.~Spagnolo, B.~Lal, P.~Corsonello, and R.~Gravina.
\newblock A novel compressive sensing method for secure and energy efficient ecg signal transmission applications.
\newblock {\em IEEE Journal of Biomedical and Health Informatics}, 29(5):3439--3450, 2025.

\bibitem{Tang2021}
W.~Tang, X.~Chen, M.~Z. Chen, J.~Y. Dai, Y.~Han, S.~Jin, Q.~Cheng, G.~Y. Li, and T.~J. Cui.
\newblock On channel reciprocity in reconfigurable intelligent surface assisted wireless networks.
\newblock {\em IEEE Wireless Communications}, 28(6):94--101, 2021.

\bibitem{testa2019compressed}
M.~Testa, D.~Valsesia, T.~Bianchi, and E.~Magli.
\newblock {\em Compressed Sensing for Privacy-Preserving Data Processing}.
\newblock Springer, 2019.

\bibitem{Tyagi2015_PIEEE}
H.~Tyagi and A.~Vardy.
\newblock Universal hashing for information-theoretic security.
\newblock {\em Proceedings of the IEEE}, 103(10):1781--1795, 2015.

\bibitem{Utkovski2019}
Z.~Utkovski, P.~Agostini, M.~Frey, I.~Bjelakovic, and S.~Stanczak.
\newblock Learning radio maps for physical-layer security in the radio access.
\newblock In {\em 2019 IEEE 20th International Workshop on Signal Processing Advances in Wireless Communications (SPAWC)}, pages 1--5, 2019.

\bibitem{Vogt2018_TCOM}
H.~Vogt, Z.~H. Awan, and A.~Sezgin.
\newblock Secret-key generation: Full-duplex versus half-duplex probing.
\newblock {\em IEEE Transactions on Communications}, 67(1):639--652, 2019.

\bibitem{wang2019generalized}
Y.~Wang and Z.~Xu.
\newblock Generalized phase retrieval: measurement number, matrix recovery and beyond.
\newblock {\em Applied and Computational Harmonic Analysis}, 47(2):423--446, 2019.

\bibitem{Wiese2017_TIT}
M.~Wiese and H.~Boche.
\newblock Semantic security via seeded modular coding schemes and ramanujan graphs.
\newblock {\em IEEE Transactions on Information Theory}, 67(1):52--80, 2021.

\bibitem{wunder2025perfectly}
G.~Wunder, A.~Flinth, D.~Becker, and B.~Gro\ss.
\newblock Perfectly secure key agreement over a full duplex wireless channel.
\newblock {\em IEEE Transactions on Information Forensics and Security}, \emph{Accepted}, 2025.
\newblock arXiv::2404.06592.

\bibitem{wunder2018secure}
G.~Wunder, I.~Roth, R.~Fritschek, B.~Gro{\ss}, and J.~Eisert.
\newblock Secure massive iot using hierarchical fast blind deconvolution.
\newblock In {\em 2018 IEEE Wireless Communications and Networking Conference Workshops (WCNCW)}, pages 119--124. IEEE, 2018.

\bibitem{Zhang2016Bilevel}
L.~Y. Zhang, K.-W. Wong, Y.~Zhang, and J.~Zhou.
\newblock Bi-level protected compressive sampling.
\newblock {\em IEEE Transactions on Multimedia}, 18(9):1720--1732, 2016.

\end{thebibliography}

\end{document}